\renewcommand\footnotetextcopyrightpermission[1]{} 
\newcommand{\method}{\textsc{PMV}\xspace}
\newcommand{\methodV}{\textsc{PMV}$_{\text{vertical}}$\xspace}
\newcommand{\methodH}{\textsc{PMV}$_{\text{horizontal}}$\xspace}
\newcommand{\methodS}{\textsc{PMV}$_{\text{selective}}$\xspace}
\newcommand{\methodHB}{\textsc{PMV}$_{\text{hybrid}}$\xspace}
\newcommand{\combTwo}{\texttt{combine2}\xspace}
\newcommand{\combAll}{\texttt{combineAll}\xspace}
\newcommand{\assign}{\texttt{assign}\xspace}
\newcommand{\hide}[1]{}
\newcommand{\rulesep}{\unskip\ \vrule\ }
\newcommand{\zerodisplayskips}{%
  \setlength{\abovedisplayskip}{3pt}%
  \setlength{\belowdisplayskip}{3pt}%
  \setlength{\abovedisplayshortskip}{0pt}%
  \setlength{\belowdisplayshortskip}{0pt}}
\appto{\normalsize}{\zerodisplayskips}
\appto{\small}{\zerodisplayskips}
\appto{\footnotesize}{\zerodisplayskips}
\newcommand{\ForEach}[1]{\For{\textbf{each} #1}}
\begin{document}
\title{PMV: Pre-partitioned Generalized Matrix-Vector Multiplication \\ for Scalable Graph Mining}


\author{Chiwan Park}
\affiliation{%
  \institution{Seoul National University}
}
\email{chiwanpark@snu.ac.kr}

\author{Ha-Myung Park}
\affiliation{%
  \institution{KAIST}
}
\email{hamyung.park@kaist.ac.kr}

\author{Minji Yoon}
\affiliation{%
	\institution{Seoul National University}
}
\email{riin0716@gmail.com}

\author{U Kang}
\affiliation{
  \institution{Seoul National University}
}
\email{ukang@snu.ac.kr}

\renewcommand{\shortauthors}{C. Park et al.}

\begin{abstract}
How can we analyze enormous networks including the Web and social networks which have hundreds of billions of nodes and edges?
Network analyses have been conducted by various graph mining methods including shortest path computation, PageRank, connected component computation, {random walk with restart}, etc.
These graph mining methods can be expressed as generalized matrix-vector multiplication which consists of few operations inspired by typical matrix-vector multiplication.
Recently, several graph processing systems based on matrix-vector multiplication or their own primitives have been proposed to deal with large graphs; however, they all have failed on Web-scale graphs due to insufficient memory space or the lack of consideration for I/O costs.

In this paper, we propose PMV (Pre-partitioned generalized Matrix-Vector multiplication), a scalable distributed graph mining method based on generalized matrix-vector multiplication on distributed systems.
PMV significantly decreases the communication cost, which is the main bottleneck of distributed systems, by partitioning the input graph in advance and judiciously applying execution strategies based on the density of the pre-partitioned sub-matrices.
Experiments show that PMV succeeds in processing up to $16\times$ larger graphs than existing distributed memory-based graph mining methods, and requires $9\times$ less time than previous disk-based graph mining methods by reducing I/O costs significantly. 
\end{abstract}

%
%
\begin{CCSXML}
  <ccs2012>
  <concept>
  <concept_id>10002951.10003227.10003351</concept_id>
  <concept_desc>Information systems~Data mining</concept_desc>
  <concept_significance>500</concept_significance>
  </concept>
  <concept>
  <concept_id>10003752.10003809.10010170</concept_id>
  <concept_desc>Theory of computation~Parallel algorithms</concept_desc>
  <concept_significance>500</concept_significance>
  </concept>
  <concept>
  <concept_id>10003752.10003809.10010172</concept_id>
  <concept_desc>Theory of computation~Distributed algorithms</concept_desc>
  <concept_significance>500</concept_significance>
  </concept>
  <concept>
  <concept_id>10003752.10003809.10003635</concept_id>
  <concept_desc>Theory of computation~Graph algorithms analysis</concept_desc>
  <concept_significance>300</concept_significance>
  </concept>
  <concept>
  <concept_id>10003752.10003809.10010172.10003817</concept_id>
  <concept_desc>Theory of computation~MapReduce algorithms</concept_desc>
  <concept_significance>300</concept_significance>
  </concept>
  </ccs2012>
\end{CCSXML}




\maketitle

\section{Introduction}
\label{sec:introduction}
\vspace{-1mm}
How can we analyze enormous networks including the Web and social networks which have hundreds of billions of nodes and edges?
Various graph mining algorithms {including shortest path computation~\cite{Dijkstra:ShortestPath, DBLP:journals/cacm/Floyd62a}, PageRank~\cite{DBLP:journals/cn/BrinP98}, connected component computation~\cite{DBLP:journals/cacm/HopcroftT73, DBLP:journals/jacm/EvenS81}, and random walk with restart~\cite{DBLP:conf/eccv/GradyF04}}, have been developed for network analyses and many of them are expressed in generalized matrix-vector multiplication form~\cite{DBLP:conf/icdm/KangTF09}.
As graph sizes increase exponentially, many efforts have been devoted to find scalable graph processing methods which could perform large-scale matrix-vector multiplication efficiently on distributed systems.

\begin{figure}[t]
	\centering
	\includegraphics[width=85mm]{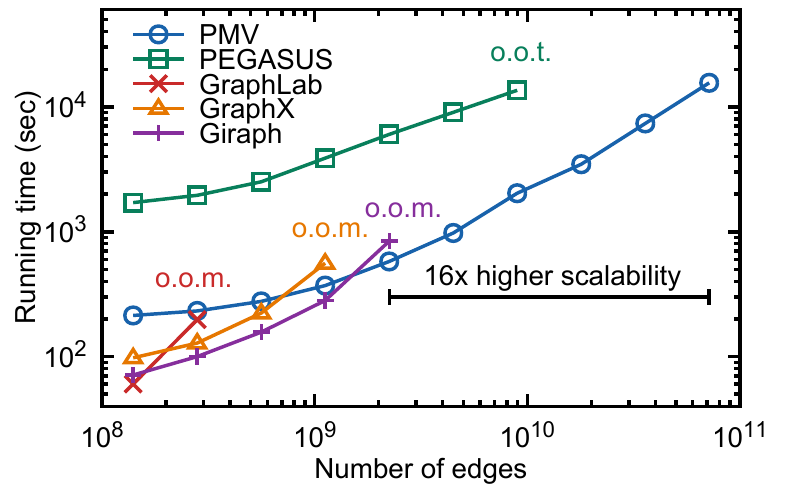}
	\vspace{-2mm}
	\caption{The running time on subgraphs of ClueWeb12. o.o.m.: out of memory. o.o.t.: out of time (>5h). Our proposed method \method is the only framework that succeeds in processing the full ClueWeb12 graph.}
	\label{fig:data-scale-runtime}
	\vspace{-2mm}
\end{figure}

Recently, several graph processing systems have been proposed to perform such computations in billion-scale graphs; they are divided into single-machine systems, distributed-memory, and MapReduce-based systems.
However, they all have limited scalability.
I/O efficient single-machine systems including GraphChi~\cite{conf/osdi/KyrolaBG12} cannot process a graph exceeding the external-memory space of a single machine.
Similarly, distributed-memory systems like GraphLab~\cite{powergraph} cannot process a graph that does not fit into the distributed-memory.
On the other hand, MapReduce-based systems~\cite{DBLP:conf/icdm/KangTF09, sgc, pte, netray, DBLP:conf/icdm/JeonJK15}, which use a distributed-external-memory like GFS~\cite{gfs} or HDFS~\cite{hdfs}, can handle much larger graphs than single-machine or distributed-memory systems do.
However, the MapReduce-based systems succeed only in non-iterative graph mining tasks such as triangle counting~\cite{pte, DBLP:conf/cikm/ParkSKP14} and graph visualization~\cite{DBLP:conf/icdm/JeonJK15, netray}.
They have limited scalability for iterative tasks like PageRank because they need to read and shuffle the entire input graph in every iteration.
In MapReduce~\cite{DBLP:journals/cacm/DeanG08}, shuffling massive data is the main performance bottleneck as it requires heavy disk and network I/Os, which seriously limit the scalability and the fault tolerance.
Thus, it is desirable to shrink the amount of shuffled data to process  matrix-vector multiplication in distributed systems.


In this paper, we propose \method (Pre-partitioned generalized Matrix-Vector multiplication), a new scalable graph mining algorithm performing large-scale generalized matrix-vector multiplication in distributed systems.
\method succeeds in processing billion-scale graphs which all other state-of-the-art distributed systems fail to process, by significantly reducing the shuffled data size, and the costs of network and disk I/Os.
\method partitions the matrix of input graph once, and reuses the partitioned matrices for all iterations.
Moreover, \method carefully assigns the partitioned matrix blocks to each worker to minimize the I/O cost. \method is a general framework that can be implemented in any distributed framework; we implement \method on Hadoop and Spark, the two most widely used distributed computing frameworks.
Our main contributions are the following:

\begin{itemize}[topsep=0.5mm]
\item{
  \textbf{Algorithm.}
  We propose \method, a new scalable graph mining algorithm for performing generalized matrix-vector multiplication in distributed systems.
  \method is designed to reduce the amount of shuffled data by partitioning the input matrix before iterative computation. Moreover, \method splits the partitioned matrix blocks into two regions and applies different placement strategies on them to minimize the I/O cost.
}
\item{
  \textbf{Cost analysis.}
  We give a theoretical analysis of the I/O costs of the block placement strategies which are the criteria of block placement selection.
  We prove the efficiency of \method by giving theoretical analyses of the performance.
}
\item{
  \textbf{Experiment.}
  We empirically evaluate \method using both large real-world and synthetic networks.
  We emphasize that only our system succeeds in processing the Clueweb12 graph which has 6 billion vertices and 71 billion edges.
  Also, \method shows up to $9\times$ faster performance than previous MapReduce-based methods do (see Figure~\ref{fig:data-scale-runtime}).
}
\end{itemize}

The rest of the paper is organized as follows.
In Section~\ref{sec:background_and_related_works}, we review existing large-scale graph processing systems and introduce GIM-V primitive for graph mining tasks.
In Section~\ref{sec:proposed_method}, we describe the proposed algorithm \method in detail along with its theoretical analysis.
After showing experimental results in Section~\ref{sec:experiments}, we conclude in Section~\ref{sec:conclusion}.
The symbols frequently used in this paper are summarized in Table~\ref{tbl:symbols}.

\begin{table}[t]
\small
\setlength{\tabcolsep}{1.5pt}
\caption{\textbf{Table of symbols.}}
\vspace{-5mm}
\begin{center}
\begin{tabularx}{0.49\textwidth}{cX}
\toprule
\textbf{Symbol} & \textbf{Description}                                                           \\ \midrule
$v$             & Vector, or set of vertices                                                     \\
$\theta$        & Degree threshold to divide sparse and dense sub-matrices \\
$out(p)$				& a set of out-neighbors of a vertex $p$ \\
$b$             & Number of vector blocks or vertex partitions                                   \\
$\psi$          & Vertex partitioning function: $v \rightarrow \{1, ..., b\}$                    \\
$v_i$           & $i$-th element of $v$                                                           \\
$v^{(i)}$       & Set of vector elements $(p, v_p)$ where $\psi(p) = i$                          \\
$v_s^{(i)}$     & Set of vector elements $(p, v_p) \in v^{(i)}$ where $|out(p)| < \theta$ \\
$v_d^{(i)}$     & Set of vector elements $(p, v_p) \in v^{(i)}$ where $|out(p)| \ge \theta$ \\
$|v|$           & Size of vector $v$, or of vertices in a graph                                  \\
$M$             & Matrix, or set of edges                                                        \\
$m_{i,j}$       & $(i, j)$-th element of $M$                                                      \\
$M^{(i,j)}$     & Set of matrix elements $(p, q, m_{p,q})$ where $\psi(p) = i$ and $\psi(q) = j$ \\
$M_s^{(i,j)}$    & Set of matrix elements $(p, q, m_{p,q}) \in M^{(i,j)}$ where $|out(q)| < \theta$ \\
$M_d^{(i,j)}$   & Set of matrix elements $(p, q, m_{p,q}) \in M^{(i,j)}$ where $|out(q)| \ge \theta$ \\
$|M|$           & Number of non-zero elements in $M$ (= number of edges in a graph)                        \\
$\otimes$       & User-defined matrix-vector multiplication                                      \\
\bottomrule
\end{tabularx}
\end{center}
\label{tbl:symbols}
\vspace{-2mm}
\end{table}

\section{Background and Related Works}
\label{sec:background_and_related_works}
\vspace{-1mm}

In this section, we first review representative graph processing systems and show their limitations on scalability (Section~\ref{sec:other-systems}).
Then, we outline MapReduce and Spark to highlight the importance of decreasing the amount of shuffled data in improving their performances (Sections~\ref{sec:mapreduce-and-spark}).
After that, we review the GIM-V model for graph algorithms (Section~\ref{sec:gimv}).


\subsection{Large-scale Graph Processing Systems}
\label{sec:other-systems}
\vspace{-1mm}


Large-scale graph processing systems can be classified into three groups: I/O efficient single-machine systems, distributed-memory systems, and MapReduce-based systems.

I/O efficient graph mining systems~\cite{conf/osdi/KyrolaBG12, conf/kdd/HanLPL0KY13, conf/bigdataconf/LinKSCLK14, conf/pkdd/GualdronCRCKK16} handle large graphs with external-memory (i.e., disk) and optimize disk I/O costs to achieve higher performance.
Some single-machine systems~\cite{DBLP:journals/pvldb/YangPS11, DBLP:conf/ppopp/SeoKK15, DBLP:conf/usenix/MaYCXD17} use accelerators like GPUs to improve performance.
However, all of these systems have limited scalability as they use only a single machine.

A typical approach to handle large-scale graphs is using multiple machines.
Recently, several graph processing systems using distributed-memory have been proposed:
Pregel~\cite{pregel}, GraphLab-PowerGraph~\cite{DBLP:journals/pvldb/LowGKBGH12, powergraph}, Trinity~\cite{trinity}, GraphX~\cite{graphx}, GraphFrames~\cite{DBLP:conf/grades/DaveJLXGZ16}, GPS~\cite{gps}, Presto~\cite{DBLP:conf/eurosys/VenkataramanBRAS13}, Pregel+~\cite{DBLP:conf/www/YanCLN15} and PowerLyra~\cite{DBLP:conf/eurosys/ChenSCC15}.
Even though these distributed-memory systems achieve faster performance and higher scalability than single machine systems do, they cannot process graphs that do not fit into the distributed-memory.
Pregelix~\cite{DBLP:journals/pvldb/BuBJCC14} succeeds in processing graphs whose size exceeds the distributed-memory space by exploiting out-of-core support of Hyracks~\cite{DBLP:conf/icde/BorkarCGOV11}, a general data processing engine. However, Pregelix uses only a single placement strategy which is similar to \methodV, one of our basic proposed methods.

MapReduce-based systems increase the processable graph size as MapReduce is a disk-based distributed system.
PEGASUS~\cite{DBLP:conf/icdm/KangTF09, DBLP:books/sp/14/KangF14} is a MapReduce-based graph mining library based on a generalized matrix-vector multiplication.
SGC~\cite{sgc} is another MapReduce-based system exploiting two join operations, namely \textsf{NE join} and \textsf{EN join}.
The MapReduce-based systems, however, still have limited scalability because they need to shuffle the input matrix and vector repeatedly. 
UNICORN~\cite{Lee201556} avoids massive data shuffling by exploiting HBase, a distributed database system on Hadoop, but it reaches another performance bottleneck, intensive random accesses to HBase.

In the next section, we highlight the importance of reducing the amount of shuffled data in MapReduce and Spark.

\subsection{MapReduce and Spark}
\label{sec:mapreduce-and-spark}
\vspace{-1mm}

MapReduce is a programming model to process large data by parallel and distributed computation.
Thanks to its ease of use, fault tolerance, and high scalability, MapReduce has been applied to various graph mining tasks including computation of radius~\cite{hadi}, triangle~\cite{pte}, visualization~\cite{netray}, etc.
MapReduce transforms an input set of key-value pairs into another output set of key-value pairs through three steps: map, shuffle, and reduce.
Each input key-value pair is transformed into a set of key-value pairs (map-step), and all the output pairs from the map-step are grouped by key (shuffle-step), then, each group of pairs is processed independently of other groups. Finally, an output set of key-value pairs is emitted (reduce-step).
The performance of a MapReduce algorithm depends mainly on the amount of shuffled data which are sorted by key requiring massive network and disk I/Os~\cite{herodotou2011hadoop}.
In each map worker, the output pairs from the map-step are stored in $R$ independent regions on disk according to the key where $R$ is the number of reduce workers (collect and spill).
Each map worker outputs key-value pairs into $R$ independent regions on local disks according to the key where $R$ is the number of reduce workers. 
The pairs stored in $R$ regions are shuffled to corresponding reduce workers periodically.
As a reduce worker has received all the pairs from the map workers, the reduce worker conducts external-sort to group the key-value pairs according to the key. 
in order to group the pairs by key (reduce). 
The performance of a MapReduce algorithm depends mainly on the amount of shuffled data since they require massive network and disk I/Os~\cite{herodotou2011hadoop}.
Requiring such heavy disk and network I/Os, a large amount of shuffled data significantly increases the running time and decreases the stability of the system.
Requiring such heavy disk and network I/Os significantly increases the running time and decreases the scalability of the system.
Thus, it is important to shrink the amount of shuffled data as much as possible to increase the performance.

Spark~\cite{DBLP:conf/hotcloud/ZahariaCFSS10} is a general data processing engine with an abstraction of data collection called Resilient Distributed Datasets (RDDs) \cite{DBLP:conf/nsdi/ZahariaCDDMMFSS12}.
Each RDD consists of multiple partitions distributed across the machines of a cluster.
Each partition has data objects and can be manipulated through operations like \texttt{map} and \texttt{reduce}.
Unlike Hadoop, a widely used open-source implementation of MapReduce, RDD partitions are cached in memory or on disks of each worker in the cluster.
Due to the in-memory caching, Spark shows a good performance for iterative computation~\cite{DBLP:journals/pvldb/ShiQMJWRO15, DBLP:conf/cikm/LeeKYLK16} which is necessary for graph mining and machine learning tasks.
However, Spark still requires disk I/O \cite{DBLP:conf/nsdi/OusterhoutRRSC15} since its typical operations with shuffling including \texttt{join} and \texttt{groupBy} operations need to access disks for external-sort.
Therefore, the effort to reduce intermediate data to be shuffled is still valuable in Spark.

\begin{table}[!tp]
	\small
	\setlength{\tabcolsep}{1.5pt}
	\caption{\textbf{Graph Algorithms on GIM-V}}
	\vspace{-4mm}
	\begin{center}
		\begin{tabularx}{0.49\textwidth}{cX}
			\toprule
			\textbf{Algorithm} & \textbf{GIM-V Functions} \\
			\midrule
			\multirow{3}{*}{PageRank} & \combTwo$(m_{i,j}, v_j)=m_{i,j} \times v_j$ \\
			& \combAll$(\{x_{i,1}, \cdots, x_{i,n}\})=\sum_{i=1}^n x_i$ \\
			& \assign$(v_i, r_i)=0.15+0.85\times r_i$ \\
			\midrule
      \multirow{5}{*}{\shortstack[c]{Random Walk\\with Restart}} & \combTwo$(m_{i,j}, v_j)=m_{i,j} \times v_j$ \\
      & \combAll$(\{x_{i,1}, \cdots, x_{i,n}\})=\sum_{i=1}^n x_i$ \\
      & \assign$(v_i, r_i)=\begin{cases}
          0.15 + 0.85 \times r_i & \text{if } i = \text{source vertex} \\
          0.85 \times r_i & \text{otherwise}
        \end{cases}$\\
      \midrule
			\multirow{3}{*}{\shortstack[c]{Single Source\\Shortest Path}} & \combTwo$(m_{i,j}, v_j)=m_{i,j} + v_j$ \\
			& \combAll$(\{x_{i,1}, \cdots, x_{i,n}\})=\min(\{x_{i,1}, \cdots, x_{i,n} \})$ \\
			& \assign$(v_i, r_i)=\min(v_i, r_i)$ \\
			\midrule
			\multirow{3}{*}{\shortstack[c]{Connected\\Component}} & \combTwo$(m_{i,j}, v_j)=v_j$ \\
			& \combAll$(\{x_{i,1}, \cdots, x_{i,n}\})=\min(\{x_{i,1}, \cdots, x_{i,n} \})$ \\
			& \assign$(v_i, r_i)=\min(v_i, r_i)$ \\
			\bottomrule
		\end{tabularx}
	\end{center}
	\label{tbl:gimv-algorithms}
	\vspace{-3mm}
\end{table}

\subsection{GIM-V for Graph Algorithms}
\label{sec:gimv}
\vspace{-1mm}
\begin{figure}[!t]
  \begin{subfigure}[t]{0.49\linewidth}
    \includegraphics[width=41mm]{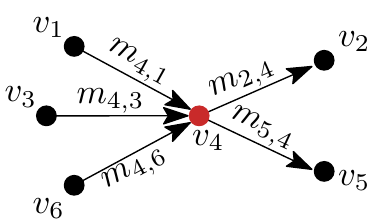}
    \caption{A graph with 6 vertices}
    \label{fig:graph_example_graph}
  \end{subfigure}
  \begin{subfigure}[t]{0.49\linewidth}
    \includegraphics[width=41mm, page=3]{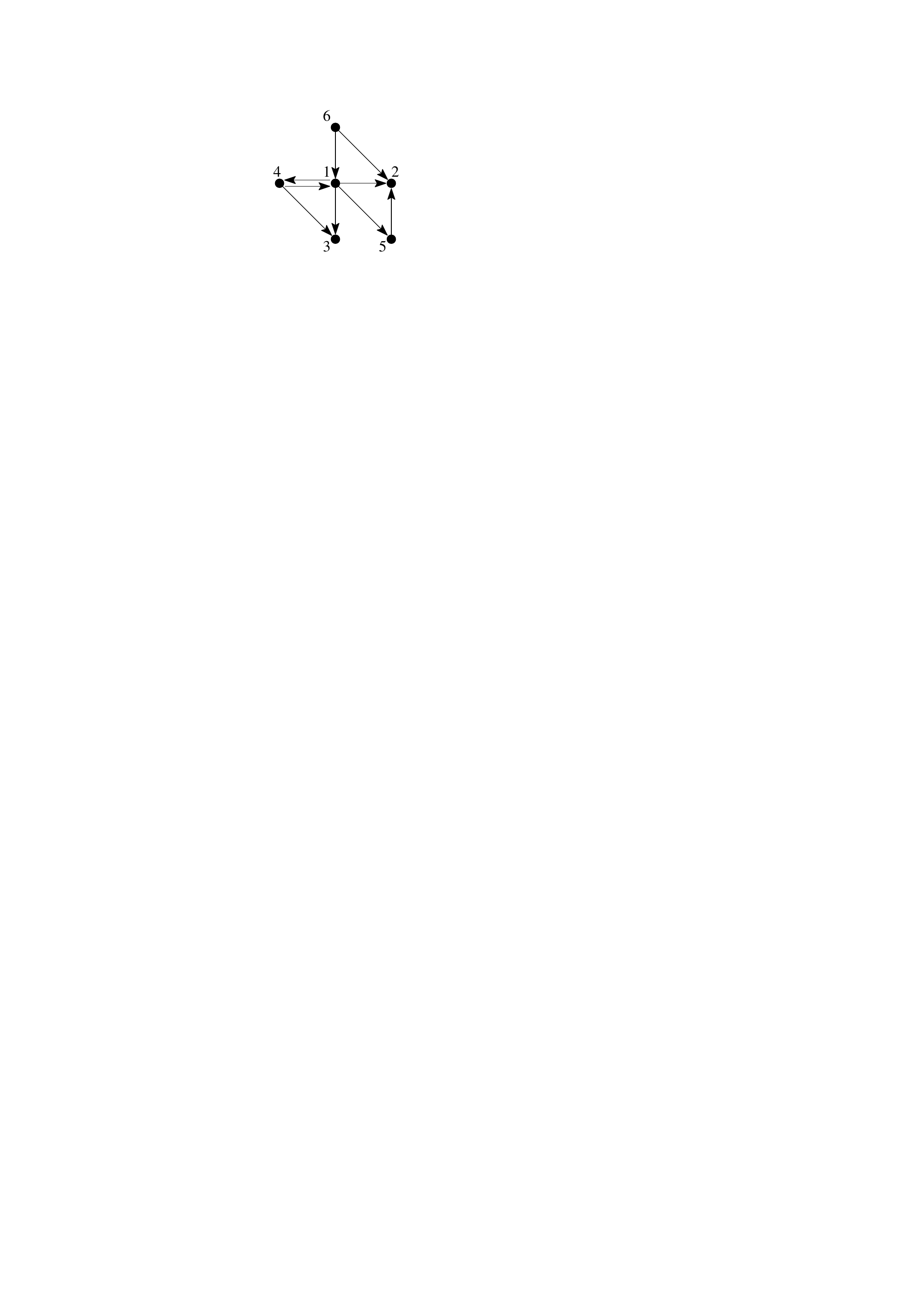}
    \caption{Matrix-vector representation}
    \label{fig:graph_example_matrix}
  \end{subfigure}
  \vspace{-3mm}
	\caption{\textbf{An example graph with 6 vertices and 9 edges and its matrix-vector representation.} In GIM-V, the vertex 4 receives 3 messages from incoming neighbors (1, 3, and 6) and sends 2 messages to outgoing neighbors (2 and 5). This process can be represented by matrix-vector multiplication.}
	\label{fig:graphmatrixvector}
\end{figure}

{
Several optimized algorithms have been proposed for specific graph mining tasks such as shortest path computation~\cite{DBLP:journals/pvldb/KalavriSL16, DBLP:conf/sigmod/NavlakhaRS08, de2012stochastic}, connected component computation~\cite{DBLP:conf/ipps/PatwaryRM12}, and random walk with restart~\cite{DBLP:conf/sigmod/ShinJSK15, DBLP:journals/tods/JungSSK16, DBLP:conf/icdm/JungJSK16, DBLP:conf/sigmod/JungPSK17}.
GIM-V (Generalized Iterative Matrix-Vector Multiplication)~\cite{DBLP:conf/icdm/KangTF09}, a widely-used graph mining primitive, unifies such graph algorithms by representing them in the form of matrix-vector multiplication.
%
%
}%
For GIM-V representation, a user needs to describe only three operations for a graph algorithm: \combTwo, \combAll, and \assign.

Consider a matrix $M$ of size $n \times n$, and a vector $v$ of size $n$, where $m_{i,j}$ is the $(i,j)$-th element of $M$, and $v_i$ is the $i$-th element of $v$ for $i,j \in \{ 1, \cdots, n \}$.
Then, the operations play the following roles:
\begin{sloppypar}
	\begin{itemize}
		\item{ $\combTwo(m_{i,j}, v_{j})$: return the combined value $x_{i,j}$ from a matrix element $m_{i,j}$ and a vector element $v_{j}$. }
		\item{ $\combAll(\{x_{i,1}, \cdots, x_{i,n} \})$: reduce the input values to a single value $r_i$.}
		\item{ $\assign(v_{i}, r_{i})$: compute the new $i$-th vector element $v'_i$ for the next iteration from the current $i$-th vector element $v_{i}$ and the reduced value $r_{i}$, and check the convergence. }
	\end{itemize}
\end{sloppypar}


Let $M \otimes v$ be a user-defined generalized matrix-vector multiplication between the matrix $M$ and the vector $v$.
The new $i$-th vector element $v'_i$ of the result vector $v'$ of $M \otimes v$ is then:
\begin{align*}
  v'_i &= \assign(v_i, \combAll( \\
  & \{ x_{i, j} | x_{i,j} = \combTwo(m_{i,j}, v_j), j \in \{ 1, \cdots, n \} \}))
\end{align*}

GIM-V can be considered as a process of passing messages from each vertex to its outgoing neighbors on a graph where $m_{i,j}$ corresponds to an edge from vertex $j$ to vertex $i$. 
In Figure~\ref{fig:graphmatrixvector}, vertex $4$ receives messages $\{ x_{4,1}, x_{4,3}, x_{4,6} \}$ from incoming neighbors $1$, $3$, and $6$, where $x_{4, j} = \combTwo(m_{4,j}, v_j)$ for $j \in \{ 1, 3, 6 \}$.
From the received messages, GIM-V calculates a new value $r_4 = \combAll(\{ x_{4,1}, x_{4,3}, x_{4,6} \})$ for the vertex $4$,
and then, updates $v_4$ with a new value $v'_4 = \assign(v_4, r_4)$.
The updated value $v'_4$ is passed to the outgoing neighbors $2$ and $5$ in the next iteration.

With GIM-V, a user can easily describe various graph algorithms.
Table~\ref{tbl:gimv-algorithms} shows implementations of PageRank, {random walk with restart,} single source shortest path, and connected component on GIM-V, respectively.
Note that only few lines of codes are required for the implementations.

\section{Proposed Method}
\label{sec:proposed_method}
\vspace{-1mm}
\begin{figure*}[!t]
  \centering
  \includegraphics[width=175mm, page=4]{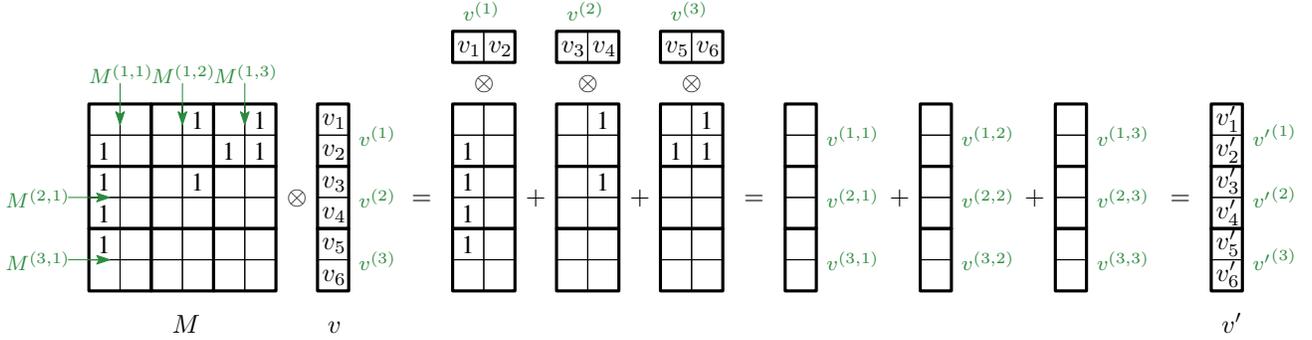}
  \vspace{-5mm}
  \caption{\textbf{The user-defined generalized matrix-vector multiplication $M \otimes v$ performed on $3 \times 3$ sub-matrices.} $M^{(i,j)}$ is $(i, j)$-th sub-matrix and $v^{(i)}$ is $i$-th sub-vector. $v^{(i,j)}$ is the result vector of sub-multiplication $M^{(i,j)} \otimes v^{(j)}$. The $i$-th sub-vector ${v'}^{(i)}$ of the result vector $v'$ is calculated by combining $v^{(i,j)}$ for $j \in \{1, \cdots, b\}$ with the \combAll ($+$) operation.}
  \label{fig:block_multiplication}
  \vspace{-5mm}
\end{figure*}

In this section, we propose \method, a scalable algorithm to efficiently perform the GIM-V on distributed systems.
\method greatly increases the scalability by the following ideas:
\begin{enumerate}
	\item{ \textit{Pre-partitioning} significantly shrinks the amount of shuffled data. \method shuffles $O(|M|)$ data only once at the beginning while the previous MapReduce algorithms shuffle $O(|M| + |v|)$ data in each iteration (Section~\ref{subsec:method-pmv}). }
	\item{ \textit{Considering the density of the pre-partitioned matrices} enables \method to minimize the I/O cost by applying the two multiplication methods: vertical placement and horizontal placement (Sections~\ref{subsec:method-horizontal}-\ref{subsec:pmv-hybrid}).
}
\end{enumerate}

We first describe the pre-partitioning method in Section~\ref{subsec:method-pmv}.
Once the graph is partitioned, the multiplication method can be classified as \methodH and \methodV depending on which partitions are processed together on the same machine.
We describe the two basic methods in Sections~\ref{subsec:method-horizontal} and \ref{subsec:method-vertical}.
In Section~\ref{subsec:pmv-selective}, we analyze the I/O cost of \methodH and \methodV, and propose a na\"ive method, namely \methodS, that selects one of the two basic methods according to the density of the input graph.
After that, we propose \methodHB, our desired method, that uses the two basic methods simultaneously in Section~\ref{subsec:pmv-hybrid}.
Finally, in Section~\ref{subsec:implementation}, we describe how to implement \method on two popular distributed frameworks, Hadoop and Spark, to show that \method is general enough to be implemented on any computing frameworks.
\subsection{\method: Pre-partitioned Generalized Matrix-Vector Multiplication}
\label{subsec:method-pmv}
\vspace{-1mm}

How can we efficiently perform GIM-V on distributed systems?
The key idea of \method is based on the observation that the input matrix $M$ never changes and is reused in each iteration, while the vector $v$ varies.
\method first divides the vector $v$ into several sub-vectors and partitions the matrix $M$ into corresponding sub-matrices which will be multiplied with each sub-vector respectively.
Then only sub-vectors are shuffled to the corresponding sub-matrices in the iteration phase, thus avoiding shuffling the entire matrix in every iteration unlike existing MapReduce-based systems which shuffle the entire matrix.
Note that, even though some distributed-memory systems also do not shuffle the matrix by retaining both the matrix and the vector in main memory of each worker redundantly, they fail when the matrix and the vector do not fit into the memory while \method is insensitive to the memory size.
\method consists of two steps: the pre-partitioning and the iterative multiplication.

\vspace{-2mm}
\subsubsection{Pre-partitioning.}

%
%

\method first initializes the input vector $v$ properly based on the graph algorithm used. For example, $v$ is set to $1/|v|$ in PageRank.
Then, \method partitions the matrix $M$ into $b \times b$ sub-matrices $M^{(i,j)} = \{ m_{p,q} \in M \ | \ \psi(p) = i, \psi(q) = j \}$ for $i,j \in \{ 1, \cdots, b \}$ where $\psi$ is a vertex partitioning function.
Likewise, the vector $v$ is also divided into $b$ sub-vectors $v^{(i)} = \{ v_{p} \in v \ | \ \psi(p) = i \}$ for $i \in \{ 1, \cdots, b \}$.
We consider the number of workers and the size of vector to determine the number $b$ of blocks.
$b$ is set to the number $W$ of workers to maximize the parallelism if $|v|/\mathcal{M} < W$, otherwise $b$ is set to $O(|v|/\mathcal{M})$ to fit a sub-vector into the main memory of size $\mathcal{M}$.
Note that this proper setting for $b$ makes \method insensitive to the memory size.
In Figure \ref{fig:graph_example_matrix}, the partitioning function $\psi$ divides the set of vertices $\{1,2,3,4,5,6\}$ into $b = 3$ subsets $\{1,2\}, \{3,4\}, \{5,6\}$.
Accordingly, the matrix and the vector are divided into $3 \times 3$ sub-matrices, and $3$ sub-vectors, respectively; sub-matrices and sub-vectors are depicted with boxes with bold border lines.
\vspace{-2mm}

\subsubsection{Iterative Multiplication.}
\method divides the entire problem $M \otimes v$ into $b^2$ subproblems and solves them in parallel.
Subproblem $\langle i,j \rangle$ is to calculate $v^{(i,j)} = M^{(i,j)} \otimes v^{(j)}$ for each pair $(i, j) \in \{1, \cdots,b\}^2$. Then, $i$-th sub-vector $v'^{(i)}$ is calculated by combining $v^{(i,j)}$ for all $j \in \{ 1, \cdots, b \}$.
Figure~\ref{fig:block_multiplication} illustrates how the entire problem is divided into several subproblems in \method.
A subproblem requires $O(|v|/b)$ of the memory size:
a subproblem should retain a sub-vector $v^{(i)}$, whose expected size is $O(|v|/b)$, in the main memory of a worker. 
The sub-matrix $M^{(i,j)}$ is cached in the main memory or external memory of a worker:
each worker reads the sub-matrix once from distributed storage and stores it locally.

Meanwhile, each worker solves multiple subproblems.
The way of distributing subproblems to workers affects the amount of I/Os.
Then, how should we assign the subproblems to workers to minimize the I/O cost?
In the following subsections, we introduce multiple \method methods to answer the question.
We focus on the I/O cost of handling only vectors because all the methods require the same I/O cost $O(|M|)$ to read the matrix by the local caching of sub-matrices.

\subsection{\methodH: Horizontal Matrix Placement}
\label{subsec:method-horizontal}
\vspace{-1mm}

\begin{algorithm}[t]
  \caption{Iterative Multiplication (\methodH)}
  \label{alg:pmv-horizontal}
  \begin{algorithmic}[1]
  	\small
    \Require a set $\{ (M^{(i,:)}, v) \ | \ i \in \{ 1, \cdots, b\} \}$ of matrix-vector pairs
    \Ensure a result vector $v' = \{v'^{(i)} \ | \ i \in \{ 1, \cdots, b\} \}$

    \Repeat
    \ForParallel{$(M^{(i,:)} , v)$}
    \State initialize $v'^{(i)}$
    \ForEach{$j \in \{ 1, \cdots, b \}$}
    \State $v^{(i,j)} \leftarrow \combAll_b(\combTwo_b(M^{(i,j)}, v^{(j)}))$
    \State $v'^{(i)} \leftarrow \combAll_b(v^{(i,j)} \cup v'^{(i)})$
    \EndFor
    \State $v'^{(i)} \leftarrow \assign_b(v^{(i)}, v'^{(i)})$
    \State store $v'^{(i)}$ to $v^{(i)}$ in distributed storage
    \EndForParallel
    \Until{convergence}
    \State \Return $v' = \bigcup_{i \in \{1, \cdots, b\}}{ v^{(i)} }$
  \end{algorithmic}
\end{algorithm}

\methodH uses horizontal matrix placement illustrated in Figure~\ref{fig:horizontal-placement} so that each worker solves subproblems which share the same output sub-vector.
As a result, \methodH does not need to shuffle any intermediate vector while the input vector is copied multiple times as described in Algorithm \ref{alg:pmv-horizontal}.
Each worker directly computes $v'^{(i)}$ from $M^{(i,:)} = \{ M^{(i,j)} \ | \ j \in \{ 1, \cdots, b \} \}$ and $v$ (lines 2-10).
For $j \in \{1, \cdots, b\}$, a worker computes intermediate vectors $v^{(i,j)}$ by combining $M^{(i,j)}$ and $v^{(j)}$, and reduces them into $v'^{(i)}$ immediately without any access to the distributed storage (lines 4-7). 
Note that $\combAll_b$ and $\combTwo_b$ are block operations for \combAll and \combTwo, respectively; $\combTwo_b(M^{(i,j)}, v^{(j)})$ applies $\combTwo(m_{p,q}, v_q)$ for all $m_{p,q} \in M^{(i,j)}$ and $v_q \in v^{(j)}$, and $\combAll_b(X^{(i,j)})$ reduces each row values in $X^{(i,j)}$ into a single value by applying the \combAll operation.
After that, each worker applies the $\assign_b$ operation where $\assign_b(v^{(j)}, x^{(j)})$ applies $\assign(v_p, x_p)$ for all vertices in $\{p \ | \ v_p \in v^{(j)} \}$ and stores the result to the distributed storage (lines 8-9).
\methodH repeats this task until convergence.

\subsection{\methodV: Vertical Matrix Placement}
\label{subsec:method-vertical}
\vspace{-1mm}

\begin{algorithm}[t]
	\caption{Iterative Multiplication (\methodV)}
	\label{alg:pmv-vertical}
	\begin{algorithmic}[1]
		\small
		\Require a set $\{ (M^{(:,j)}, v^{(j)}) \ | \ j \in \{ 1, \cdots, b\} \}$ of matrix-vector pairs
		\Ensure a result vector $v' = \{v'^{(j)} \ | \ j \in \{ 1, \cdots, b\} \}$
		
		\Repeat
		\ForParallel{$(M^{(:,j)}  , v^{(j)})$}
		\ForEach{sub-matrix $M^{(i,j)} \in M^{(:,j)} $}
  		\State $v^{(i,j)} \leftarrow \combAll_b(\combTwo_b(M^{(i,j)}, v^{(j)}))$
  		\State store $v^{(i,j)}$ to distributed storage
		\EndFor
		\State \textbf{Barrier}
		\State load $v^{(j,i)}$ for $i \in \{1, \cdots, b\} \setminus \{j\}$
		\State $v'^{(j)} \leftarrow \assign_b(v^{(j)}, \combAll_b( \bigcup_{i \in \{1, \cdots, b\}}{ v^{(j,i)} } ))$
		\State store $v'^{(j)}$ to $v^{(j)}$ in distributed storage
		\EndForParallel
		\Until{convergence}
    \State\Return $v' = \bigcup_{j \in \{1, \cdots, b\}}v^{(j)}$
	\end{algorithmic}
\end{algorithm}

\methodV uses vertical matrix placement illustrated in Figure~\ref{fig:vertical-placement} to solve the subproblems that share the same input sub-vector in the same worker.
By doing so, \methodV reads each sub-vector only once in each worker.
As described in Algorithm \ref{alg:pmv-vertical}, \methodV computes $v^{(:,j)} = \{ v^{(i,j)} \ | \ i \in \{1, \cdots, b \} \}$ for each $j \in \{ 1, \cdots, b \}$ in parallel (lines 2-11).
Given $j \in \{ 1, \cdots, b \}$, a worker first loads $v^{(j)}$ into the main memory; then, it computes $v^{(i,j)}$ by sequentially reading $M^{(i,j)}$ for each $i \in \{ 1, \cdots, b \}$ and stores $v^{(i,j)}$ into the distributed storage (lines 3-6).
The worker of $j$ is responsible for combining all intermediate data $v^{(j, i)}$ for $i \in \{1, \cdots, b \}$ stored in the distributed storage into the final value $v^{(j)}$.
After waiting for all the other workers to finish the sub-multiplication using a barrier (line 7), the worker of $j$ loads $v^{(j,i)}$ for $i \in \{1, \cdots, b \}$ from the distributed storage (line 8).
Then, the worker calculates $v'^{(j)}$ which replaces $v^{(j)}$ in the distributed storage (lines 9-10).
Note that, the vectors $v^{(j,i)}$ do not need to be loaded all at once because the \combAll operation is commutative and associative.
\methodV repeats this task until convergence.

\subsection{\methodS: Selecting Best Method between \methodH and \methodV}
\label{subsec:pmv-selective}
\vspace{-1mm}

\begin{algorithm}[t]
  \caption{Iterative Multiplication (\methodS)}
  \label{alg:pmv-selective}
  \begin{algorithmic}[1]
    \Require a set $\{M^{(i,j)}\ |\ (i,j) \in \{1, \cdots, b\}^2\}$ of matrix blocks, a set $\{v^{(i)}\ |\ i \in \{1, \cdots, b\}\}$ of vector blocks
    \Ensure a result vector $v' = \{v'^{(j)} \ | \ j \in \{ 1, \cdots, b\} \}$

    \If{$\left(1-|M|/|v|^2\right)^{|v|/b} < 0.5$}
    \State $v'\leftarrow$ \methodH$(\{ (M^{(i,:)}, v) \ | \ i \in \{ 1, \cdots, b\} \})$
    \Else
    \State $v'\leftarrow$ \methodV$(\{ (M^{(:,j)}, v^{(j)}) \ | \ j \in \{ 1, \cdots, b\} \})$
    \EndIf
  \end{algorithmic}
\end{algorithm}

Given a graph, how can we decide the best multiplication method between \methodH and \methodV? 
In distributed graph systems, a major bottleneck is not computational cost, but expensive I/O cost.
\methodS compares the expected I/O costs of two basic methods and selects the one having the minimum expected I/O cost.
The expected I/O costs of \methodH and \methodV are derived in Lemmas~\ref{lem:io-cost-horizontal} and \ref{lem:io-cost-vertical}, respectively.

\begin{lemma}[I/O Cost of Horizontal Placement]
  \vspace{-2mm}
  \label{lem:io-cost-horizontal}
  \methodH has an expected I/O cost $C_{h}$ per iteration:
  \begin{equation}
  \label{eqn:io-cost-with-horizontal-placement}
  \mathbb{E}\left[C_{h}\right] = (b+1)|v|
  \end{equation}
  where $|v|$ is the size of vector $v$ and $b$ is the number of vector blocks.
  \vspace{-2mm}
\end{lemma}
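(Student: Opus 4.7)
The plan is to decompose the per-iteration I/O cost of \methodH into a vector-read component and a vector-write component, counted against the distributed storage only, and then to add them. Since the statement of the lemma already restricts attention to vector I/O (the preceding paragraph in Section~\ref{subsec:method-pmv} notes that matrix I/O is $O(|M|)$ and identical across all placement strategies thanks to local caching), intermediate computations that stay inside a worker's memory will not be counted.

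First, I would analyze reads. Inspecting Algorithm~\ref{alg:pmv-horizontal}, worker $i$ owns the band $M^{(i,:)}$ and, to execute line~5 for every $j \in \{1,\dots,b\}$, must have access to every sub-vector $v^{(j)}$; hence it reads the entire input vector $v$ from distributed storage once per iteration. Because the sub-vectors $\{v^{(j)}\}_{j=1}^{b}$ partition $v$, we have $\sum_{j=1}^{b} |v^{(j)}| = |v|$ regardless of how the partitioning function $\psi$ assigns vertices, so each worker's read volume is exactly $|v|$ and, summed over the $b$ workers running in parallel, the read cost is $b \cdot |v|$.

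Second, I would analyze writes. After line~8 each worker $i$ stores $v'^{(i)}$ back to distributed storage (line~9); the output sub-vectors again form a partition of $v'$, so $\sum_{i=1}^{b} |v'^{(i)}| = |v|$, giving a total write cost of $|v|$. The intermediate vectors $v^{(i,j)}$ produced in line~5 are immediately reduced into $v'^{(i)}$ in line~6 within the same worker's memory and are never written out, so they contribute nothing. Adding the two components yields $b|v| + |v| = (b+1)|v|$, establishing Equation~\eqref{eqn:io-cost-with-horizontal-placement}.

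This proof is mostly bookkeeping, so there is no deep obstacle; the only delicate point is being careful about what counts as distributed-storage I/O. I would explicitly justify that $v^{(i,j)}$ never crosses the storage boundary (in contrast to \methodV, where it does), and I would remark that the quoted bound is in fact deterministic in $|v|$---the partition sizes $|v^{(j)}|$ may fluctuate with $\psi$, but their sum telescopes to $|v|$, so the expectation notation simply absorbs any randomness in $\psi$ without affecting the value.
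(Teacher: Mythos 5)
Your proposal is correct and follows essentially the same argument as the paper's proof: each of the $b$ workers reads the whole vector $v$ from distributed storage (cost $b|v|$) and the workers collectively write the result vector once (cost $|v|$), giving $(b+1)|v|$. The extra observations you make---that the intermediate vectors $v^{(i,j)}$ never touch distributed storage and that the partition sizes telescope to $|v|$---are just more explicit versions of what the paper leaves implicit.
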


\begin{proof}
  With the horizontal placement, each worker should load all vector blocks from the distributed storage since \combTwo function in each worker is computed with all sub-vectors.
  This causes $b|v|$ I/O cost.
  Also the result vector $|v|$ should be written to the distributed storage.
  Thus, the total I/O cost is $(b+1)|v|$.
  Note that \methodH requires no communication between workers.
\end{proof}

\begin{lemma}[I/O Cost of Vertical Placement]
  \vspace{-2mm}
  \label{lem:io-cost-vertical}
  \methodV has an expected I/O cost $C_{v}$ per iteration:
  \begin{equation}
    \label{eqn:io-cost-with-vertical-placement}
    \mathbb{E}\left[C_{v}\right] = 2|v|\left(1+(b-1)\left(1-\left(1-{|M|}/{|v|^2}\right)^{|v|/b}\right)\right)
  \end{equation}
  where $|v|$ is the size of vector $v$, $|M|$ is the number of non-zero elements in the matrix $M$, and $b$ is the number of vector blocks.
  \vspace{-2mm}
\end{lemma}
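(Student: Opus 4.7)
My plan is to decompose the per-iteration I/O into four components and then compute each in expectation: (i) loading the sub-vector $v^{(j)}$ at worker $j$, (ii) writing the intermediate vectors $v^{(i,j)}$ produced by worker $j$ and destined for worker $i$ when $i \ne j$, (iii) reading $v^{(j,i)}$ at worker $j$ for $i \ne j$ after the barrier, and (iv) writing the updated sub-vector $v'^{(j)}$. Intermediate vectors with $i=j$ remain local to worker $j$ and therefore incur no distributed-storage I/O; this is the first subtlety to flag.

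The easy pieces are components (i) and (iv): each $v^{(j)}$ has expected size $|v|/b$, so summing over the $b$ workers contributes $|v|$ for reading and $|v|$ for writing, totalling $2|v|$. The work is in components (ii) and (iii), which depend on the expected size of a single intermediate vector $v^{(i,j)}$. Since $v^{(i,j)} = \combAll_b(\combTwo_b(M^{(i,j)}, v^{(j)}))$, its size equals the number of rows in block $i$ that contain at least one non-zero entry of $M^{(i,j)}$.

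To evaluate this I adopt the natural uniform-random model used for block-size estimates: each of the $|v|^2$ matrix positions is non-zero independently with probability $|M|/|v|^2$. The block $M^{(i,j)}$ has $|v|/b$ columns, so a fixed row $p$ with $\psi(p)=i$ is all-zero in $M^{(i,j)}$ with probability $(1 - |M|/|v|^2)^{|v|/b}$. Since $v^{(i,j)}$ has $|v|/b$ candidate rows, linearity of expectation gives
\begin{equation*}
\mathbb{E}\bigl[|v^{(i,j)}|\bigr] = \frac{|v|}{b}\Bigl(1-\bigl(1-|M|/|v|^2\bigr)^{|v|/b}\Bigr).
\end{equation*}
There are $b(b-1)$ cross-worker intermediate vectors (the pairs with $i \ne j$), each of which is written once and read once, yielding
\begin{equation*}
2\cdot b(b-1)\cdot\frac{|v|}{b}\Bigl(1-\bigl(1-|M|/|v|^2\bigr)^{|v|/b}\Bigr) = 2(b-1)|v|\Bigl(1-\bigl(1-|M|/|v|^2\bigr)^{|v|/b}\Bigr).
\end{equation*}
Adding the $2|v|$ from components (i) and (iv) and factoring gives exactly the stated expression.

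The main obstacle, in my view, is not algebraic but modelling: I must state precisely the uniform-independent placement assumption on the non-zeros (and that the partitioner $\psi$ assigns roughly $|v|/b$ vertices to each block) so that the row-wise Bernoulli calculation is legitimate. Given the same assumption underlies Lemma~\ref{lem:io-cost-horizontal} implicitly, I would note it once and reuse it. A minor secondary point to verify is that the $v^{(j,j)}$ blocks are indeed retained in local memory (which Algorithm~\ref{alg:pmv-vertical} permits, since worker $j$ both writes and consumes $v^{(j,j)}$), so they should not be double-counted.
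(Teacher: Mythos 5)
Your proof is correct and follows essentially the same route as the paper's: the same decomposition into $2|v|$ for reading/writing the vector plus $2\sum_{i\ne j}\mathbb{E}\bigl[\lvert v^{(i,j)}\rvert\bigr]$ for transferring cross-worker intermediate results, and the same row-wise Bernoulli estimate $\mathbb{E}\bigl[\lvert v^{(i,j)}\rvert\bigr]=\frac{|v|}{b}\bigl(1-(1-|M|/|v|^2)^{|v|/b}\bigr)$. Your explicit statement of the uniform-independent non-zero placement assumption and the observation that the $v^{(j,j)}$ blocks stay local are both welcome clarifications of points the paper leaves implicit.
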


\begin{proof}
  The expected I/O cost of \methodV is the sum of 1) the cost to read the vector from the previous iteration, 2) the cost to transfer the sub-multiplication results between workers using distributed storage, and 3) the cost to write the result vector to the distributed storage.
  To transfer one of the sub-multiplication results, \methodV requires $2\left|v^{(i,j)}\right|$ of I/O costs: one is for writing the results to distributed storage, and the other is for reading them from the distributed storage.
  Therefore,
  \begin{align}
  \begin{split}
    \label{eqn:expected-cost-vertical-abstract}
      \mathbb{E}\left[C_{v}\right] &= 2|v| + \sum_{i \ne j} \mathbb{E}\left[2\left|v^{(i,j)}\right|\right] = 2|v| + 2b(b - 1)\mathbb{E}\left[\left|v^{(i,j)}\right|\right]
  \end{split}
  \end{align}
  where $v^{(i,j)}$ is the result vector of sub-multiplication $M^{(i,j)} \otimes v^{(j)}$.
  For each vertex $u \in v^{(i,j)}$, let $X_u$ denote an event that $u$-th element of $v^{(i,j)}$ has a non-zero value.
  Then,
  \begin{align*}
    \begin{split}
      \mathbf{P}(X_u) &= 1 - \mathbf{P}(u \text{ has no in-edges in } M^{(i, j)}) = 1 - \left(1 - \frac{|M|}{|v|^2}\right)^{|v|/b}
    \end{split}
  \end{align*}
  by assuming that every matrix block has the same number of edges (non-zeros).
  The expected size of the sub-multiplication result is:
  \begin{align}
  \begin{split}
    \label{eqn:size-of-sub-multiplication-result}
      \mathbb{E}\left[|v^{(i,j)}|\right] &= \sum_{u \in v^{(i)}}\mathbf{P}(X_u) = \frac{|v|}{b}\left(1 - \left(1 - \frac{|M|}{|v|^2}\right)^{|v|/b}\right)
  \end{split}
  \end{align}

  \noindent Combining \eqref{eqn:expected-cost-vertical-abstract} and \eqref{eqn:size-of-sub-multiplication-result}, we obtain the claimed I/O cost.
\end{proof}
\vspace{-2mm}

Lemmas \ref{lem:io-cost-horizontal} and \ref{lem:io-cost-vertical} state that the cost depends on the density of the matrix and the number of vector blocks.
Comparing \eqref{eqn:io-cost-with-horizontal-placement} and \eqref{eqn:io-cost-with-vertical-placement}, the condition to prefer horizontal placement over vertical placement is given by \eqref{eqn:placement-decision-equation}.

\vspace{-4mm}
\begin{align}
  \begin{split}
  \label{eqn:placement-decision-equation}
  \mathbb{E}\left[C_{h}\right] < & \mathbb{E}\left[C_{v}\right] \Leftrightarrow \left(1-{|M|}/{|v|^2}\right)^{|v|/b} < 0.5
  \end{split}
\end{align}

\begin{figure*}[!t]
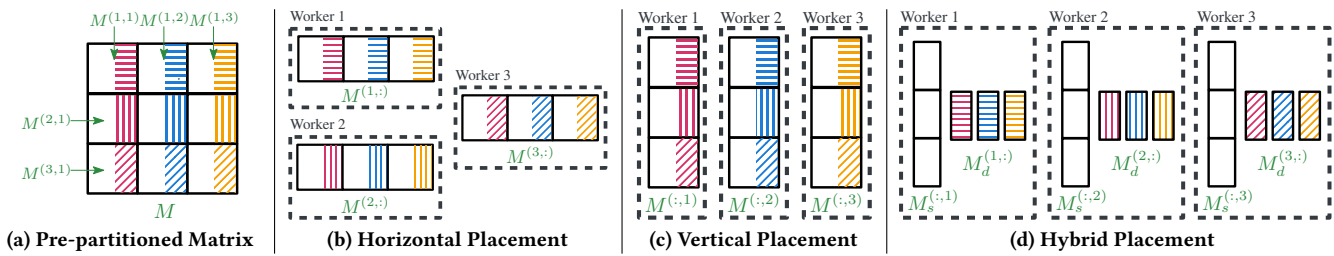

  \begin{subfigure}[t]{35mm}
    \centering
    \includegraphics[height=28mm, page=10]{fig/graph_matrix_vector}
    \caption{Pre-partitioned Matrix}
    \label{fig:prepartitioned-matrix}
  \end{subfigure}
  \hfil
  \rulesep
  \hfil
  \begin{subfigure}[t]{43mm}
    \centering
    \includegraphics[height=28mm, page=11]{fig/graph_matrix_vector}
    \caption{Horizontal Placement}
    \label{fig:horizontal-placement}
  \end{subfigure}
  \hfil
  \rulesep
  \hfil
  \begin{subfigure}[t]{32mm}
    \centering
    \includegraphics[height=28mm, page=12]{fig/graph_matrix_vector}
    \caption{Vertical Placement}
    \label{fig:vertical-placement}
  \end{subfigure}
  \hfil
  \rulesep
  \hfil
  \begin{subfigure}[t]{57mm}
    \centering
    \includegraphics[height=28mm, page=13]{fig/graph_matrix_vector}
    \caption{Hybrid Placement}
    \label{fig:hybrid-placement}
  \end{subfigure}
  \caption{An example of matrix placement methods on $3\times3$ sub-matrices with 3 workers. In each matrix block $M^{(i,j)}$, the striped region and the white region represent the dense region $M_d^{(i,j)}$ and the sparse region $M_s^{(i,j)}$, respectively. The horizontal placement groups the matrix blocks $M^{(i,:)}$ which share rows to a worker $i$ while the vertical placement groups the matrix blocks $M^{(:,j)}$ which share columns to a worker $j$. The hybrid placement groups the sparse regions $M_s^{(i, :)}$ which share rows (same stripe pattern) and the dense regions $M_d^{(:, i)}$ which share columns (same color) to a worker $i$.}
  \label{fig:matrix-placements}
  \vspace{-4mm}
\end{figure*}


For sparse matrices, the I/O cost of \methodV is lower than that of \methodH.
On the other hand, for dense matrices, \methodH has smaller I/O cost than that of \methodV.
As described in Algorithm~\ref{alg:pmv-selective}, \methodS first evaluates the condition \eqref{eqn:placement-decision-equation} and selects the best method based on the result.
Thus, the performance of \methodS is better than or at least equal to those of \methodV or \methodH.
Our experiment (see Section~\ref{sec:experiment-performance-density}) shows the effectiveness of each method according to the matrix density.

\hide{

{\color{blue}By pre-partitioning the input matrix and choosing the best method between \methodH and \methodV, \methodS has an upper bound $O(|v|^2/\mathcal{M})$ of I/O costs while the previous disk-based methods require $O(|M| + |v|)$ of I/O costs.}
\begin{lemma}
  \label{lem:io-cost-pmv-selective}
  The expected I/O cost of \methodS is $O(|v|^2/\mathcal{M})$ where $|v|$ is the size of vector $v$ and $\mathcal{M}$ is the amount of main memory of each worker.
\end{lemma}

\begin{proof}
  When \methodS selects the horizontal placement, \methodH, it requires the network communication to copy the input vector for each iteration.
  Then, the amount of network communication is $|v|^2/\mathcal{M}$ from Lemma \ref{lem:io-cost-horizontal}.
  When the vertical placement, \methodV is chosen by \methodS, it requires the network communication to exchange the sub-multiplication results.
  From Lemma \ref{lem:io-cost-vertical}, the amount of network communication with the vertical placement is $\rho \cdot |v|^2/\mathcal{M}$ where $\rho = 1 - \left(1 - \frac{|M|}{|v|^2}\right)^{\mathcal{M}} < 1$ is the average density of sub-multiplication results.
  Therefore, the expected I/O cost of \methodS is $O(|v|^2/\mathcal{M})$ with any block placement strategies.
\end{proof}
}

\subsection{\methodHB: Using \methodH and \methodV Together}
\label{subsec:pmv-hybrid}
\vspace{-1mm}


\methodHB improves \methodS to further reduce I/O costs by using \methodH and \methodV together.
The main idea is based on the fact that \methodV is appropriate for a sparse matrix while \methodH is appropriate for a dense matrix, as we discussed in Section~\ref{subsec:pmv-selective}.
We also observe that density of a matrix block varies across different sub-areas of the block. In other words, some areas of each matrix block are relatively dense with many high-degree vertices while the other areas are sparse.
Using these observations,
\methodHB divides each vector block $v^{(i)}$ into a sparse region $v_s^{(i)}$ with vertices whose out-degrees are smaller than a threshold $\theta$ and a dense region $v_d^{(i)}$ with vertices whose out-degrees are larger than or equal to the threshold.
Likewise, each matrix block $M^{(i,j)}$ is also divided into a sparse region $M_s^{(i,j)}$ where each source vertex is in $v_s^{(j)}$ and a dense region $M_d^{(i,j)}$ where each source vertex is in $v_d^{(j)}$.
Then, \methodHB executes \methodH for the dense area and \methodV for the sparse area.
Figure~\ref{fig:hybrid-placement} illustrates \methodHB on $3\times3$ matrix blocks with 3 workers.

Algorithm~\ref{alg:pmv-hybrid} describes \methodHB. \methodHB performs an additional pre-processing step after the pre-partitioning step to split each matrix block into the dense and sparse regions (lines 1-2).
Then, each worker first multiplies all assigned sparse matrix-vector pairs $(M_s^{(:,j)}, v_s^{(j)})$ by applying \methodV (lines 5-11).
After that, the dense matrix-vector pairs $(M_d^{(j,:)}, v_d^{(:)})$ are multiplied using \methodH and added to the results of the sparse regions (lines 12-16).
Finally, each worker splits the result vector into two regions again for next iteration (lines 17-19).
\methodHB repeats this task until convergence like \methodH and \methodV do.

\begin{algorithm}[t]
  \caption{Iterative Multiplication (\methodHB)}
  \label{alg:pmv-hybrid}
  \begin{algorithmic}[1]
  	\small
    \Require a set $\{M^{(i,j)}\ |\ (i,j) \in \{1, \cdots, b\}^2\}$ of matrix blocks, a set $\{v^{(i)}\ |\ i \in \{1, \cdots, b\}\}$ of vector blocks.
    \Ensure a result vector $v' = \{v'^{(i)}\ |\ i \in \{1, \cdots, b\} \}$

    \State split $v^{(i)}$ into $v_s^{(i)}$ and $v_d^{(i)}$ for $i \in \{1, \cdots b\}$
    \State split $M^{(i,j)}$ into $M_s^{(i,j)}$ and $M_d^{(i,j)}$ for $(i,j) \in \{1, \cdots, b\}^2$

    \Repeat
    \ForParallel{$(M_s^{(:,j)}, v_s^{(j)}, M_d^{(j,:)}, v_d^{(:)})$}
    \ForEach{sub-matrix $M_s^{(i,j)} \in M_s^{(:,j)}$}
    \State $v_s^{(i,j)} \gets \combAll_b(\combTwo_b(M_s^{(i,j)}, v_s^{(j)}))$
    \State store $v_s^{(i,j)}$ to distributed storage
    \EndFor
    \State \textbf{Barrier}
    \State load $v_s^{(j,i)}$ for $i \in \{1, \cdots, b\} \setminus \{j\}$
    \State $v'^{(j)} \gets \combAll_b(\bigcup_{i \in \{1, \cdots, b\}}{ v_s^{(j,i)} } ))$
    \ForEach{$i \in \{1, \cdots, b\}$}
    \State $v_d^{(j,i)} \gets \combAll_b(\combTwo_b(M_d^{(j,i)}, v_d^{(i)}))$
    \State $v'^{(j)} \gets \combAll_b(v_d^{(j,i)}, v'^{(j)})$
    \EndFor
    \State $v'^{(j)} \gets \assign_b(v_s^{(j)} \cup v_d^{(j)}, v'^{(j)})$
    \State split $v'^{(j)}$ into ${v'}_s^{(j)}$ and ${v'}_d^{(j)}$
    \State store ${v'}_s^{(j)}$ to $v_s^{(j)}$ in distributed storage
    \State store ${v'}_d^{(j)}$ to $v_d^{(j)}$ in distributed storage
    \EndForParallel
    \Until{convergence}
    \State\Return $v' = \bigcup_{i \in \{1, \cdots, b\}}v_s^{(i)} \cup v_d^{(i)}$
  \end{algorithmic}
\end{algorithm}

The threshold $\theta$ to split the sparse and dense regions affects the performance and the I/O cost of \methodHB.
If we set $\theta = 0$, \methodHB is the same as \methodH because there is no vertex in the sparse regions.
On the other hand, if we set $\theta = \infty$, \methodHB is the same as \methodV because there is no vertex in the dense regions.
To find the threshold which minimizes the I/O cost, we compute the expected I/O cost of \methodHB varying $\theta$ by Lemma~\ref{lem:cost-hybrid}, and choose the one with the minimum I/O cost.

\begin{lemma}[I/O Cost of \methodHB]
  \vspace{-2mm}
  \label{lem:cost-hybrid}
  \methodHB has an expected I/O cost $C_{hb}$ per iteration:
  \begin{align}\small
    \vspace{-3mm}
    \begin{split}
      \mathbb{E}\left[C_{hb}\right] =&\ |v|\left(\mathbf{P}_{out}(\theta) + b(1 - \mathbf{P}_{out}(\theta)) + 1\right) \\
      &+ 2|v|(b-1)\sum_{d=0}^{|v|}\left(1 - \left(1 - \frac{1}{b}\cdot \mathbf{P}_{out}(\theta)\right)^{d}\right)\cdot p_{in}(d)
    \end{split}
    \vspace{-3mm}
  \end{align}
  \noindent where $|v|$ is the size of vector $v$, $b$ is the number of vector blocks, $\mathbf{P}_{out}(\theta)$ is the ratio of vertices whose out-degree is less than $\theta$, and $p_{in}(d)$ is the ratio of vertices whose in-degree is $d$.
\end{lemma}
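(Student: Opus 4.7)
The plan is to decompose the total I/O cost of \methodHB into the contributions of its sparse (vertical) sub-phase, its dense (horizontal) sub-phase, and the final vector write-back, and then to evaluate each part in expectation using the same probabilistic model as Lemmas~\ref{lem:io-cost-horizontal} and \ref{lem:io-cost-vertical}. Following Algorithm~\ref{alg:pmv-hybrid}, every iteration performs exactly four I/O-producing activities: (a) each worker $j$ loads its sparse input block $v_s^{(j)}$ once (vertical phase, line 6); (b) each worker loads all dense input blocks $v_d^{(i)}$ for $i\in\{1,\dots,b\}$ (horizontal phase, line 13); (c) each worker writes its updated output block $v'^{(j)}$ back to distributed storage (lines 18--19); and (d) the intermediate sub-multiplication results $v_s^{(i,j)}$ for $i\neq j$ of the vertical phase are written to and later read from distributed storage (lines 7, 10). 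No shuffle is charged for the horizontal phase because \methodH reduces the $v_d^{(j,i)}$ in memory.

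For (a)--(c), I would use that a $\mathbf{P}_{out}(\theta)$ fraction of the vertices lies in the sparse region and the complement in the dense region. Activity (a) then contributes $|v|\mathbf{P}_{out}(\theta)$ in total across all workers; activity (b) contributes $b|v|(1-\mathbf{P}_{out}(\theta))$ because each dense block is loaded once per worker; and activity (c) contributes $|v|$. Summing yields the first summand of the claim, $|v|\bigl(\mathbf{P}_{out}(\theta)+b(1-\mathbf{P}_{out}(\theta))+1\bigr)$.

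For activity (d), I would mimic the argument of Lemma~\ref{lem:io-cost-vertical} but restrict attention to source vertices that are sparse. The total shuffle cost is $2\sum_{i\neq j}\mathbb{E}\bigl[|v_s^{(i,j)}|\bigr]$. Fix a target vertex $u\in v^{(i)}$ of in-degree $d$; the $u$-th entry of $v_s^{(i,j)}$ is non-zero iff at least one of $u$'s $d$ in-neighbors lies in $v_s^{(j)}$. Modeling each in-neighbor as being assigned to block $j$ with probability $1/b$ and, independently, being sparse with probability $\mathbf{P}_{out}(\theta)$, the per-in-neighbor probability of landing in $v_s^{(j)}$ is $\mathbf{P}_{out}(\theta)/b$, so the non-zero probability is $1-\bigl(1-\mathbf{P}_{out}(\theta)/b\bigr)^d$. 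Averaging over the in-degree distribution $p_{in}(d)$ and over the $|v|/b$ vertices of $v^{(i)}$ gives $\mathbb{E}\bigl[|v_s^{(i,j)}|\bigr]=(|v|/b)\sum_{d}p_{in}(d)\bigl(1-(1-\mathbf{P}_{out}(\theta)/b)^d\bigr)$, and multiplying by $2b(b-1)$ (the number of $(i,j)$ pairs with $i\neq j$, times the read$+$write factor) recovers the second summand of the claim.

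The main obstacle is keeping the two independent randomizations --- block assignment ($\psi$) and sparse/dense classification (out-degree vs.\ $\theta$) --- cleanly separated, so that ``in-neighbor of $u$ lies in $v_s^{(j)}$'' is modeled as a single Bernoulli event with parameter $\mathbf{P}_{out}(\theta)/b$ rather than double-counting or entangling the row partition $i$ with the column partition $j$. A secondary concern is justifying the independence-across-in-neighbors assumption and the identical-distribution of blocks; these are already implicit in Lemma~\ref{lem:io-cost-vertical}, so stating them once as modeling assumptions and then reusing them should suffice.
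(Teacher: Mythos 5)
Your proposal is correct and follows essentially the same route as the paper: the same four-part decomposition (sparse reads $|v|\mathbf{P}_{out}(\theta)$, dense reads $b|v|(1-\mathbf{P}_{out}(\theta))$, result write $|v|$, and shuffle cost $2\sum_{i\neq j}\mathbb{E}\bigl[|v_s^{(i,j)}|\bigr]$), and the same Bernoulli model in which an in-neighbor of $u$ lands in $v_s^{(j)}$ with probability $\mathbf{P}_{out}(\theta)/b$, averaged over the in-degree distribution $p_{in}(d)$. Your explicit statement of the independence assumptions between the block assignment and the sparse/dense classification is a point the paper leaves implicit, but the argument is otherwise identical.
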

\begin{proof}
  The expected I/O cost of \methodHB is the sum of 1) the cost to read the sparse regions of each vector block, 2) the cost to transfer the sub-multiplication results, 3) the cost to read the dense regions of each vector block, and 4) the cost to write the result vector.
  Like \methodV, \methodHB requires $2\left|v_s^{(i,j)}\right|$ of I/O costs to transfer one of the sub-multiplication results by writing the results to distributed storage and reading them from the distributed storage.
  Therefore,
  \begin{align}\small
    \vspace{-3mm}
    \label{eqn:cost-hybrid-form}
    \begin{split}
      \mathbb{E}\left[C_{hb}\right] =&\ |v|\cdot \mathbf{P}_{out}(\theta) + \sum_{i \ne j}\mathbb{E}\left[2\left|v_s^{(i,j)}\right|\right] \\&+ b|v|\cdot (1 - \mathbf{P}_{out}(\theta)) + |v|
    \end{split}
    \vspace{-3mm}
  \end{align}
  \noindent where $v_s^{(i,j)}$ is the result vector of sub-multiplication between $M_s^{(i,j)}$ and $v_s^{(j)}$. For each vertex $u \in v_s^{(i,j)}$, let $X_u$ denote an event that $u$-th element of $v_s^{(i,j)}$ has a non-zero value. Then,
  \begin{align*}\small
    \vspace{-3mm}
    \begin{split}
      \mathbf{P}(X_u) &= 1 - \mathbf{P}(u \text{ has no in-edges in } M_s^{(i,j)})
      = 1 - \left(1 - \frac{\mathbf{P}_{out}(\theta)}{b} \right)^{|in(u)|}
    \end{split}
    \vspace{-3mm}
  \end{align*}
  \noindent where $in(u)$ is a set of in-neighbors of vertex $u$.
  Considering $|in(u)|$ as a random variable following the in-degree distribution $p_{in}(d)$,
  \begin{align}\small
    \vspace{-2mm}
    \label{eqn:cost-hybrid-subresults}
    \begin{split}
      \mathbb{E}\left[\left|v_s^{(i,j)}\right|\right] &= \sum_{u \in v^{(i)}} \mathbf{P}(X_u) = \frac{|v|}{b}\cdot \mathbb{E}\left[\mathbf{P}(X_u)\right] \\
      &= \frac{|v|}{b}\sum_{d=0}^{|v|}\left(1 - \left(1 - \frac{1}{b}\cdot \mathbf{P}_{out}(\theta)\right)^{d}\right)\cdot p_{in}(d)
    \end{split}
  \end{align}
  Combining \eqref{eqn:cost-hybrid-form} and \eqref{eqn:cost-hybrid-subresults}, we obtain the claimed I/O cost.
\end{proof}
\vspace{-2mm}

Note that the in-degree distribution $p_{in}(d)$ and the cumulative out-degree distribution $\mathbf{P}_{out}(\theta)$ are approximated well using  power-law degree distributions for real world graphs. Although the exact cost of \methodHB in Lemma~\ref{lem:cost-hybrid} includes data-dependent terms and thus is not directly comparable to those of other \method methods,
in Section~\ref{sec:experiment-performance-density}
we experimentally show that
\methodHB achieves higher performance and smaller amount of I/O than other \method methods.

\subsection{Implementation}
\label{subsec:implementation}
\vspace{-1mm}

In this section, we discuss practical issues to implement \method on distributed systems.
We only discuss the issues related to \methodHB because \methodH and \methodV are special cases of \methodHB, as we discussed in Section~\ref{subsec:pmv-hybrid}.
We focus on famous distributed processing frameworks, Hadoop and Spark.
Note that \method can be implemented on any distributed processing frameworks.

\vspace{-2mm}
\subsubsection{\method on Hadoop.}
The pre-partitioning is implemented in a single MapReduce job.
The implementation places the matrix blocks within the same column into a single machine;
each matrix element $m_{p,q} \in M^{(i,j)}$ moves to $j$-th reducer during map and shuffle steps; after that, each reducer groups matrix elements into matrix blocks, and divides each matrix block into two regions (sparse and dense) by the given threshold $\theta$.
The iterative multiplication is implemented in a single Map-only job.
Each mapper solves the assigned subproblems one by one; for each subproblem, a mapper reads the corresponding sub-matrix and the sub-vector from HDFS.
The mapper first computes the sub-multiplication $M_s^{(i,j)} \otimes v_s^{(j)}$ of sparse regions, and waits for all the other mappers to finish the sub-multiplication using a barrier.
The result vector $v_s^{(i,j)}$ of a subproblem is sent to the $i$-th mapper via HDFS to be merged to $v'^{(i)}$.
After that, the sub-multiplications $M_d^{(i,:)} \otimes v_d^{(:)}$ of dense regions are computed by the $i$-th mapper.
The result vector $v_d^{(i,j)}$ is directly merged to $v'^{(i)}$ in the main memory.
After the result vector $v'$ is computed, each mapper splits the result vector into the sparse and dense regions.
Then, the next iteration starts with the new vector $v'$ by the same Map-only job until convergence.

\vspace{-2mm}
\subsubsection{\method on Spark.}
The pre-partitioning is implemented by a \texttt{partitionBy} and two \texttt{mapPartitions} operations of typical Resilient Distributed Dataset (RDD) API.
The \texttt{partitionBy} operation uses a custom partitioner to partition the matrix blocks.
The \texttt{mapPartitions} operations output four RDDs, sparseMatRDD, denseMatRDD, sparseVecRDD, and denseVecRDD which contain sparse and dense regions of matrix blocks, and sparse and dense regions of vector blocks, respectively.
Each iteration of matrix-vector multiplication is implemented by five RDD operations.
For the sparse regions, the multiplication comprises the following operations: (1) \texttt{join} operation on the sparseMatRDD and the sparseVecRDD to combine vector blocks and matrix blocks, (2) \texttt{mapPartitions} operation to create the partial vector blocks, and (3) \texttt{reduceByKey} operation on the partial vector blocks.
In the case of the dense regions, each iteration of the multiplication comprises the following operations: (1) \texttt{flatMap} operation on the denseVecRDD to copy the vector blocks, (2) \texttt{join} operation on the denseMatRDD and the copied denseVecRDD, and (3) \texttt{mapPartitions} operation to create the updated vecRDD.
After both multiplications for the sparse and dense regions, (4) \texttt{join} operation is used to combine the results of multiplications in sparse regions and dense regions.
Finally, (5) \texttt{mapPartitions} splits the combined results into sparseVecRDD and denseVecRDD again.
We ensure the colocation of relevant matrix blocks and vector blocks by using a custom partitioner.
Therefore, each worker runs the \texttt{join} operation combining the sparse matrices and the sparse vectors without network I/Os.
The \texttt{join} operation for the dense regions requires network I/Os but only the dense vectors, whose sizes are relatively small in \methodHB, are transferred.

\section{Experiments}
\label{sec:experiments}
\vspace{-1mm}
\begin{figure}[!t]
  \centering
  \hfil
  \includegraphics[width=90mm]{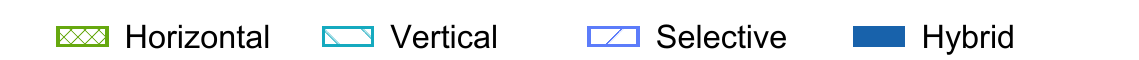}
  \begin{subfigure}[t]{0.49\linewidth}
    \includegraphics[width=40mm]{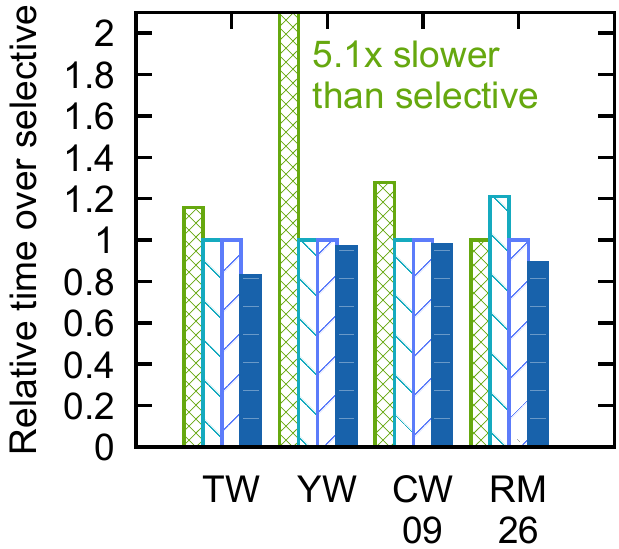}
    \caption{Running time}
 		\label{fig:block-placement-runtime}
  \end{subfigure}
  \hfil
  \begin{subfigure}[t]{0.49\linewidth}
    \includegraphics[width=40mm]{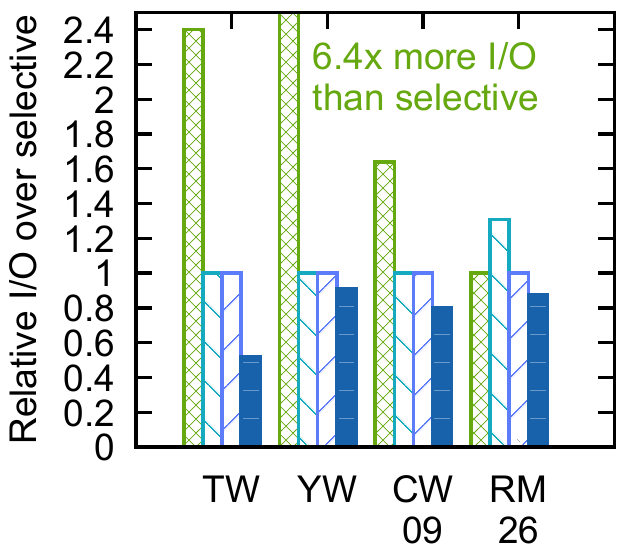}
    \caption{Amount of I/O}
		\label{fig:block-placement-io}
  \end{subfigure}
	\vspace{-3mm}
	\caption{The effect of the matrix density on running time and I/O.
		\methodV is faster and more I/O efficient than \methodH for sparse graphs while \methodH is faster and more I/O efficient than \methodV for a dense graph.
		\methodHB shows the best performance for all cases outperforming other versions of \method.
	}
	\label{fig:block-placement}
\end{figure}

We perform experiments to answer the following questions:

\begin{itemize}
  \item[Q1.] How much does \method improve the performance and scalability compared to the existing systems? (Section~\ref{sec:experiment-performance})
  \item[Q2.] How much does the matrix density affect the performance of the \method's four methods? (Section~\ref{sec:experiment-performance-density})
  \item[Q3.] How much does the threshold $\theta$ affect the performance and the amount of I/O of \methodHB? (Section~\ref{sec:experiment-threshold})
  \item[Q4.] How does \method scale up with the number of workers? (Section~\ref{sec:experiment-machine-scalability})
  \item[Q5.] How does the performance of \method differ depending on the underlying distributed framework? (Section~\ref{sec:experiment-underlying-engine})
\end{itemize}

\begin{table}[!t]
  \centering
  \small
  \setlength{\tabcolsep}{1.5pt}
  \caption{\textbf{The summary of graphs.}}
  \vspace{-4mm}
  \begin{center}
    \begin{tabularx}{0.44\textwidth}{rrrr}
      \toprule
      \textbf{Graph} & \textbf{Vertices} & \textbf{Edges} & \textbf{Source} \\
      \midrule
      ClueWeb12 (CW12) & 6,231,126,594 & 71,746,553,402 & Lemur Project\footnotemark[1] \\
      ClueWeb09 (CW09) & 1,684,876,525 & 7,939,647,897 & Lemur Project\footnotemark[2] \\
      YahooWeb (YW) & 720,242,173 & 6,636,600,779 & Webscope\footnotemark[3] \\
      Twitter (TW) & 41,652,230 & 1,468,365,182 & Kwak et al.\footnotemark[4]~\cite{Kwak10www} \\
      \midrule
      RMAT26 (RM26) & 42,147,725 & 5,000,000,000 & TegViz.\footnotemark[5]~\cite{DBLP:conf/icdm/JeonJK15} \\
      \bottomrule
    \end{tabularx}
  \end{center}
  \label{tbl:datasets}
  \vspace{-2mm}
\end{table}

\footnotetext[1]{\url{https://lemurproject.org/clueweb12/}}
\footnotetext[2]{\url{https://lemurproject.org/clueweb09/}}
\footnotetext[3]{\url{http://webscope.sandbox.yahoo.com}}
\footnotetext[4]{\url{http://an.kaist.ac.kr/traces/WWW2010.html}}
\footnotetext[5]{\url{http://datalab.snu.ac.kr/tegviz}}

\subsection{Datasets}
\label{sec:datasets}
\vspace{-1mm}

We use real-world graphs to compare \method to existing systems (Sections~\ref{sec:experiment-performance} and \ref{sec:experiment-machine-scalability}) and a synthetic graph to evaluate the performance of \method (Section~\ref{sec:experiment-performance-density}).
The graphs are summarized in Table~\ref{tbl:datasets}.
\textit{Twitter} is a who-follows-whom network in Twitter crawled in 2010.
\textit{YahooWeb}, \textit{ClueWeb09} and \textit{ClueWeb12} are page-level hyperlink networks on the WWW. 
RMAT~\cite{DBLP:conf/sigcomm/FaloutsosFF99} is a famous graph generation model that matches the characteristic of real-world networks.
We generate an RMAT graph with parameters $a=0.57$, $b=0.19$, $c=0.19$, and $d=0.05$ using TegViz~\cite{DBLP:conf/icdm/JeonJK15}, a distributed graph generator.



\subsection{Environment}
\label{sec:experiment-environment}
\vspace{-1mm}

We implemented \method on Hadoop and Spark, which are famous distributed processing frameworks.
Sections~\ref{sec:experiment-performance}, \ref{sec:experiment-performance-density}, and \ref{sec:experiment-machine-scalability} show the experimental results on Hadoop.
The result on Spark is in Section~\ref{sec:experiment-underlying-engine}.
%
We compare \method to existing graph processing systems: PEGASUS, GraphX, GraphLab, and Giraph. 
PEGASUS is a disk-based system, and the others are distributed-memory based systems.

We run our experiments on a cluster of 17 machines; one is a master and the others are for workers.
Each machine is equipped with an Intel E3-1240v5 CPU (quad-core, 3.5GHz), 32GB of RAM, and 4 hard disk drives.
A machine that is not the master runs 4 workers, each with 1 CPU core and 6GB of RAM.
All the machines are connected via 1 Gigabit Ethernet.
Hadoop 2.7.3, Spark 2.0.1 and MPICH 3.0.4 are installed on the cluster.

\subsection{Performance of \method}
\label{sec:experiment-performance}
\vspace{-1mm}

We compare the running time of \method and competitors (PEGASUS, GraphX, GraphLab, and Giraph) on ClueWeb12; induced subgraphs with varying number of edges are used.
For each system, we run the PageRank algorithm with 8 iterations.
Figure~\ref{fig:data-scale-runtime} shows the running time of all systems on various graph sizes.
We emphasize that only \method succeeds in processing the entire ClueWeb12 graph.
The memory-based systems fail on graphs with more than 2.3 billion edges due to out of memory error,
while PEGASUS fails to process graphs with 9 billion edges within 5 hours.
The underlying causes are as follows.
Giraph requires that all the out-edges of the assigned vertices are stored in the main memory of the worker.
However, this requirement can be easily broken since highly skewed degree distribution is likely to lead to out of memory error.
GraphLab uses the vertex-cut partitioning method and copies the vertices to the multiple workers which have the edges related to the vertices.
The edges and the copied vertices are stored in the main memory of each worker, and incur the out of memory error.
GraphX uses the same approach as GraphLab, but succeeds in processing a graph which GraphLab fails to process because Spark, its underlying data processing engine, uses both the disk and the main memory of each worker.
Even GraphX, however, fails to process graphs with more than 2.3 billion edges due to huge number of RDD partitions.

\subsection{Effect of Matrix Density}
\label{sec:experiment-performance-density}
\vspace{-1mm}

We evaluate the performance 
of \method on graphs with varying density.
The results are in Figure~\ref{fig:block-placement}.
Twitter, YahooWeb, and ClueWeb09 are real-world sparse graphs where the matrix density $|M|/|v|^2$ is less than $10^{-7}$ while RMAT26 is a synthetic dense graph where the matrix density is larger than $10^{-7}$.
As we discussed in Section~\ref{subsec:pmv-selective}, the vertical placement is appropriate for a sparse graph while the horizontal placement is appropriate for a dense graph.
Figures \ref{fig:block-placement-runtime} and \ref{fig:block-placement-io} verify the relation between the performance and the density of graph.
\methodV shows a better performance than \methodH when the input matrix is sparse.
On the other hand, if the matrix is dense, \methodH provides a better performance than \methodV.
\methodS shows the same performance as the best of \methodH and \methodV as we expected.
\methodHB significantly reduces the amount of I/O for both sparse and dense graphs, and improves the performance up to 18\% from \methodS.

\subsection{Effect of Threshold $\theta$ }
\label{sec:experiment-threshold}
\vspace{-1mm}

\begin{figure}[!tp]
  \centering
  \includegraphics[width=80mm]{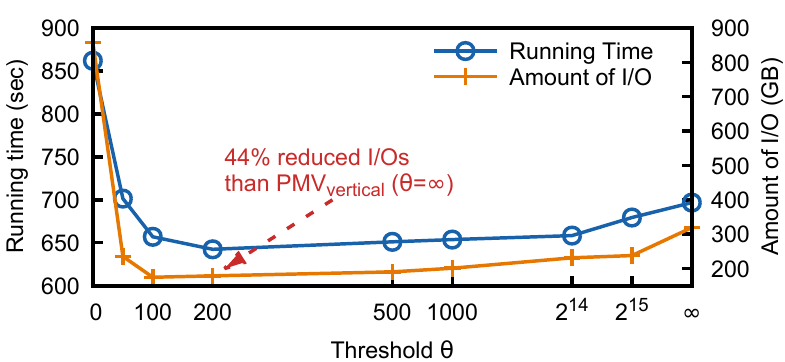}
  \vspace{-3mm}
  \caption{The effect of threshold $\theta$ on the running time and the amount of I/O. \methodHB shows the best performance when $\theta = 200$ with 44\% reduced amount of I/O compared to when $\theta = \infty$, i.e., \methodV.
  }
  \label{fig:threshold}
  \vspace{-2mm}
\end{figure}

We iterate \methodHB based PageRank algorithm 30 times on Twitter graph varying threshold $\theta$.
Figure~\ref{fig:threshold} presents the effect of the threshold on the running time and the amount of I/O.
\methodV ($\theta = \infty$) shows better performance and lower amount of I/O than \methodH ($\theta = 0$), as we expected, because Twitter is sparse with density lower than $10^{-7}$.
\methodHB achieves the best performance with $\theta = 200$: in the setting
\methodHB shows 44\% decreased amount of I/O compared to that of \methodV, from 318GB to 178GB.
{\color{blue}
}
Note that $\theta=100$ gives the minimum amount of I/O while $\theta=200$ gives the fastest running time.
A possible explanation is that skewness of in-degree distribution of dense area and out-degree distribution of sparse area affects the running times of horizontal and vertical computations of \methodHB, respectively; however, the difference is minor and does not change the conclusion that \methodHB outperforms all other versions of \method.
%

\subsection{Machine Scalability}
\label{sec:experiment-machine-scalability}
\vspace{-1mm}

\begin{figure}[!t]
	\centering
	\includegraphics[width=85mm]{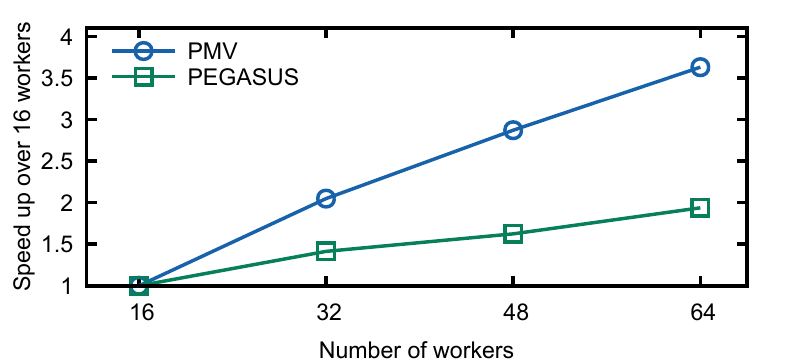}
	\vspace{-3mm}
	\caption{Machine scalability of \method on YahooWeb. \method shows linear machine scalability with slope close to 1, while PEGASUS does with a much smaller slope because of the curse of the last reducer problem~\cite{DBLP:conf/www/SuriV11} incurred by the high-degree vertices.}
	\label{fig:machine-scale-runtime}
\end{figure}


We evaluate the machine scalability of \method and competitors by running the PageRank algorithm with varying number of workers on YahooWeb.
Figure~\ref{fig:machine-scale-runtime} shows the speedup according to the number of workers from 16 to 64;
the speedup is defined as $t_{16}/t_n$, where $t_n$ is the running time with $n$ workers.
%
We omit GraphLab, Giraph, and GraphX because they fail to process the YahooWeb graph on 16 workers.
\method shows linear machine scalability with slope close to 1, while PEGASUS does with a much smaller slope.
PEGASUS suffers from the curse of the last reducer problem~\cite{DBLP:conf/www/SuriV11} which is incurred by the high-degree vertices.
\method overcomes the problem by treating the high-degree vertices in multiple workers.


\subsection{Underlying Engine}
\label{sec:experiment-underlying-engine}
\vspace{-1mm}

\begin{figure}[!t]
  \centering
  \includegraphics[width=87mm]{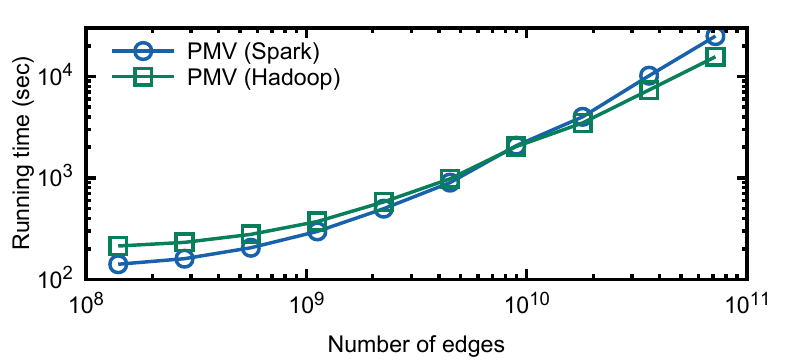}
  \vspace{-3mm}
  \caption{The running time of \method on Hadoop and Spark.
    PMV is faster on Spark than on Hadoop when the graph is small. On large graphs, however, PMV runs faster on Hadoop than on Spark (see Section~\ref{sec:experiment-underlying-engine} for details).
  }
  \label{fig:spark-real-runtime}
\end{figure}

Figure~\ref{fig:spark-real-runtime} shows the performance of PMV according to underlying systems: Hadoop and Spark.
We use ClueWeb12 with varying number of edges as in Section~\ref{sec:experiment-performance}.
When the graph is small, PMV on Spark beats PMV on Hadoop.
This is because Spark is highly optimized for iterative computation;
Spark requires much less start-up and clean-up time for each iteration than Hadoop does.
When the graph is large, however, PMV on Spark falls behind PMV on Hadoop.
PMV on Spark requires more memory than PMV on Hadoop since Spark's RDD is immutable;
for updating a vector, PMV on Spark creates a new vector requiring additional memory while PMV on Hadoop updates the vector in-place.
Accordingly,
%
when the graph is large, \method on Spark needs to partition the input vector into smaller blocks than \method on Hadoop does.
This makes the performance of \method on Spark worse than that of \method on Hadoop for large graphs.

\section{Conclusion}
\label{sec:conclusion}
\vspace{-1mm}
We propose \method, a scalable graph mining method based on generalized matrix-vector multiplication on distributed systems.
\method exploits both horizontal and vertical placement strategies
to reduce I/O costs.
\method shows up to $16 \times$ larger scalability than existing distributed memory methods, $9 \times$ faster performance than existing disk-based ones, and linear scalability for the number of edges and machines.
Future research directions include a graph partitioning algorithm that improves the performance of graph mining algorithms based on distributed matrix-vector multiplication.

\bibliographystyle{ACM-Reference-Format}
\bibliography{references}

\appendix

\end{document}